\numberwithin{equation}{section}
\newtheorem{theorem}{Theorem}
\newtheorem{corollary}{Corollary}[theorem]
\newcommand{\E}{\mathbb{E}} 
\newcommand{\bs}{\boldsymbol}
\title{On the precision matrix of an irregularly sampled AR(1) process}
\author[1]{Benjamin Allévius} 
\affil[1]{Department of Mathematics, Stockholm University, Sweden}
\date{\vspace{-5ex}}
\begin{document}
\maketitle

\begin{chapabstract} 
  \small{
\noindent 
Irregularly sampled AR(1) processes appear in many computationally demanding applications. 
This text provides an analytical expression for the precision matrix of such a process, 
and gives efficient algorithms for density evaluation and simulation, implemented in the R package \textsf{irregulAR1}. \vspace{10pt}

\noindent
\textbf{Keywords}: AR(1) process, time series, precision matrix, missing data.
}
  
\end{chapabstract}
\thispagestyle{empty}
\thispagestyle{empty}
\clearpage
\setcounter{page}{1}
\section{Introduction} \label{sec:intro}
\noindent
Autoregressive (AR) processes are widely used to model time series with dependence.
One particular application is in disease surveillance, where missing data is a commonly
occurring phenomenon \citep[see e.g.][]{Gharbi2011,Sumi2011} that needs to be accounted
for in both inferential procedures and prediction.
AR processes of order one are a particularly parsimonious model choice, in which the next value of
the process depends only on the previous one. When data is missing for such a process, what
can be said about the sample of irregularly spaced values?

In this paper, we consider a zero-mean stationary autoregressive process of order one with Gaussian errors, 
which can be expressed as
\begin{align} 
  X_1 &\sim \text{Normal}\left(0, \frac{\sigma^2}{1 - \rho^2} \right), \nonumber \\
  X_t &= \rho X_{t-1} + \epsilon_t, \quad  t = 2, 3, \ldots, \label{eq:ar1def} \\
  \epsilon_t &\overset{\text{iid}}{\sim} \text{Normal}(0, \sigma^2), \nonumber
\end{align}
where $|\rho| < 1$, and, to eliminate the trivial case, $\rho \neq 0$.
We consider the zero-mean AR(1) process here because a mean term can always be added back later.
As shown in \citet[p.\ 217]{Lindsey2004} the joint distribution
of $\tilde{\bs{x}} = (X_1, X_2, \ldots, X_n)'$ is multivariate normal with zero (vector) mean
and a covariance matrix given by 
\begin{align} \label{eq:fullSigma}
 \tilde{\bs{\Sigma}} =
 \frac{\sigma^2}{1 - \rho^2}
 \begin{bmatrix}
   1          & \rho       &\dots  & \rho^{n-2} & \rho^{n-1} \\
   \rho       & 1          &\dots  & \rho^{n-3} & \rho^{n-2} \\
   \vdots     & \vdots     &\ddots & \vdots     & \vdots \\
   \rho^{n-2} & \rho^{n-3} &\dots  & 1          & \rho \\
   \rho^{n-1} & \rho^{n-2} &\dots  & \rho       & 1
 \end{bmatrix}.
\end{align}
Further, the precision matrix $\tilde{\bs{Q}} = \tilde{\bs{\Sigma}}^{-1}$ is tridiagonal and 
may be expressed as
\begin{align} \label{eq:fullQ}
 \tilde{\bs{Q}} =
 \frac{1}{\sigma^2}
 \begin{bmatrix}
   1      & -\rho    &\dots & 0         & 0 \\
   -\rho  & 1+\rho^2 &\dots & 0         & 0\\
   \vdots & \vdots   &\ddots &\vdots     &\vdots\\
   0      & 0        &\dots  & 1+\rho^2  & -\rho \\
   0      & 0        &\dots  & -\rho     & 1
 \end{bmatrix}.
\end{align}
As shown in \citet{Rue2005}, the sparsity of the precision matrix $\tilde{\bs{Q}}$ carries
over to its Cholesky decomposition, which
enables very fast (linear in $n$) density evaluation and random number
generation. In comparison, a Cholesky or eigendecomposition of the (dense) covariance matrix
has a computational complexity of order $n^3$.

In some applications it may be impossible to sample a process modeled by Equation \eqref{eq:ar1def} at 
consecutive time points. In such circumstances, it may still be of interest have a computationally convenient 
representation of the distribution of the sample $\bs{x} = (X_{t_1}, X_{t_2}, \ldots, X_{t_m})'$,
where $1 \leq t_1 < t_2 < \ldots < t_m \leq n$. 
For this irregularly sampled stationary AR(1) process, we see from the expression of 
$\tilde{\bs{\Sigma}}$ in Equation \eqref{eq:fullSigma} that the (marginal) distribution of 
$\bs{x}$ will also be multivariate normal, with the zero vector (of length $m$) as mean, and a 
covariance matrix with elements $\left(\bs{\Sigma}\right)_{ij} = \frac{\sigma^2}{1 - \rho^2} \rho^{|t_i-t_j|}$
The aim is now to find an expression for $\bs{Q} = \bs{\Sigma}^{-1}$ similar in neatness to $\tilde{\bs{Q}}$,
with implications for density evaluation and simulation of the irregularly sampled process $\bs{x}$.

\section{Results} \label{sec:results}

\begin{theorem} \label{thm:Q}
Let $\bs{x} = (X_{t_1}, X_{t_2}, \ldots, X_{t_m})'$ be the values of the AR(1) process described
in Equation \eqref{eq:ar1def}, sampled at times $t_1 < t_2 < \ldots < t_m$.
Assume the process is in its stationary state, and for brevity of notation, assume $\sigma = 1$. 
The precision matrix $\bs{Q}$ of $\bs{x}$ then has elements
\begin{align*}
  Q_{1,1} &= \frac{1-\rho^2}{1-\rho^{2(t_2-t_1)}}, \\
  Q_{m,m} &= \frac{1-\rho^2}{1-\rho^{2(t_m-t_{m-1})}}, \\
  Q_{i,i} &= \frac{\left(1-\rho^2\right)\left(1 - \rho^{2(t_{i+1} - t_{i-1})}\right)}
                      {\left(1 - \rho^{2(t_{i} - t_{i-1})}\right) \left(1 - \rho^{2(t_{i+1} - t_{i})}\right)}, \quad 1 < i < m, \\
  Q_{i+1,i} &= Q_{i,i+1} = -\frac{\left(1 - \rho^2\right) \rho^{t_{i+1} - t_i}}
                                 {1 - \rho^{2(t_{i+1} - t_i)}}, \quad 1 \leq i < m, \\
  Q_{i+k,i} &= Q_{i,i+k} = 0, \quad k = 2,3, \ldots, m - i, \quad 1 \leq i < m - 1.
\end{align*}
$\bs{Q}$ is thus a tridiagonal matrix. 

\end{theorem}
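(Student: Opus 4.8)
The plan is to exploit the Markov property of the process in Equation \eqref{eq:ar1def}. Since $\{X_t\}$ is a first-order Markov chain, any subsample $\bs{x} = (X_{t_1}, \ldots, X_{t_m})'$ is again Markov, so its joint density factorizes as $p(\bs{x}) = p(x_{t_1}) \prod_{i=2}^{m} p(x_{t_i} \mid x_{t_{i-1}})$. In such a product of a marginal and successive one-step conditionals, the Gaussian exponent is a sum of terms each involving at most two consecutive coordinates; consequently the quadratic form $\bs{x}'\bs{Q}\bs{x}$ has no cross term between $x_{t_i}$ and $x_{t_j}$ for $|i-j| \geq 2$, which already gives that $\bs{Q}$ is tridiagonal, and reading off the coefficients gives its entries.

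First I would assemble the pieces of the factorization. Iterating Equation \eqref{eq:ar1def} across a gap of length $h = t_i - t_{i-1}$ gives $X_{t_i} = \rho^{h} X_{t_{i-1}} + \sum_{j=1}^{h} \rho^{h-j} \epsilon_{t_{i-1}+j}$, hence $X_{t_i} \mid X_{t_{i-1}} \sim \text{Normal}\bigl(\rho^{h} X_{t_{i-1}},\, v_i\bigr)$ with $v_i = \sum_{j=0}^{h-1} \rho^{2j} = \bigl(1 - \rho^{2(t_i - t_{i-1})}\bigr)/(1-\rho^2)$ (using $\sigma = 1$), while stationarity gives $X_{t_1} \sim \text{Normal}\bigl(0, (1-\rho^2)^{-1}\bigr)$. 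Therefore $-2\log p(\bs{x})$ equals, up to an additive constant, $(1-\rho^2)\,x_{t_1}^2 + \sum_{i=2}^{m} v_i^{-1}\bigl(x_{t_i} - \phi_i x_{t_{i-1}}\bigr)^2$, where $\phi_i = \rho^{t_i - t_{i-1}}$.

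Next I would match coefficients with $\bs{x}'\bs{Q}\bs{x}$. The $i$-th summand $v_i^{-1}(x_{t_i} - \phi_i x_{t_{i-1}})^2$ contributes $v_i^{-1}$ to $Q_{i,i}$, $\phi_i^2 v_i^{-1}$ to $Q_{i-1,i-1}$, and $-\phi_i v_i^{-1}$ to each of $Q_{i-1,i}$ and $Q_{i,i-1}$, and nothing else. Accumulating, $Q_{m,m} = v_m^{-1}$, $Q_{1,1} = (1-\rho^2) + \phi_2^2 v_2^{-1}$, $Q_{i,i} = v_i^{-1} + \phi_{i+1}^2 v_{i+1}^{-1}$ for $1 < i < m$, $Q_{i,i+1} = Q_{i+1,i} = -\phi_{i+1} v_{i+1}^{-1}$, and all other entries vanish. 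Substituting the expressions for $v_i$ and $\phi_i$ then reduces everything to elementary algebra: $Q_{m,m}$ and $Q_{i,i+1}$ are immediate; for $Q_{1,1}$ one uses $1 + \rho^{2d}/(1-\rho^{2d}) = 1/(1-\rho^{2d})$ with $d = t_2 - t_1$; and for the interior $Q_{i,i}$ one clears denominators and observes that the resulting numerator factor $1 - \rho^{2(t_i - t_{i-1})} \rho^{2(t_{i+1} - t_i)}$ equals $1 - \rho^{2(t_{i+1} - t_{i-1})}$.

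I do not expect a serious obstacle here; the steps requiring care are the (standard) justification that a subsample of a Markov chain is still Markov --- which is what licenses the product factorization and hence the tridiagonal form --- and keeping the indexing of the boundary entries $Q_{1,1}, Q_{m,m}$ separate from that of the interior ones. A more computational alternative, avoiding any probabilistic input, is to verify $\bs{Q}\bs{\Sigma} = \bs{I}$ directly from $\Sigma_{ij} = \rho^{|t_i - t_j|}/(1-\rho^2)$ and the assumed tridiagonality of $\bs{Q}$, treating the cases $j < i-1$, $j \in \{i-1,i,i+1\}$, and $j > i+1$ in turn; this works but needs noticeably more case-checking.
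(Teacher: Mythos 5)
Your proof is correct, and it takes a genuinely different route from the paper's. You exploit the Markov property of the subsampled chain to factorize the joint density as $p(x_{t_1})\prod_{i=2}^m p(x_{t_i}\mid x_{t_{i-1}})$ and then read off $\bs{Q}$ by matching coefficients in the resulting quadratic form; the algebra checks out (in particular the interior diagonal reduces via $1-\rho^{2(t_i-t_{i-1})}\rho^{2(t_{i+1}-t_i)}=1-\rho^{2(t_{i+1}-t_{i-1})}$ exactly as you say). The paper instead proceeds in two separate stages: it first establishes tridiagonality by interpreting the off-diagonal entries of $\bs{Q}$ as negated, scaled conditional correlations (citing Theorem 2.2 of Rue and Held) and showing that $\text{Corr}(X_{t_i},X_{t_j}\mid\bs{x}_{-ij})=0$ whenever an observation lies between $t_i$ and $t_j$; it then solves the linear system $\bs{\Sigma}\bs{Q}=\bs{I}_m$, reduced by symmetry and tridiagonality to $2m-1$ unknowns, by forward recursion in $k$. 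Your factorization argument buys a more self-contained and arguably cleaner derivation: tridiagonality and the explicit entries fall out simultaneously from a single telescoping of the density, with no appeal to external theorems on Gaussian Markov random fields and no recursive system to solve. The paper's approach, on the other hand, makes the probabilistic meaning of the entries (conditional precisions and correlations) explicit, which it then reuses directly in Corollary \ref{corol:mean}. The one step you should spell out rather than merely name is that the subsample of the first-order Markov chain is itself Markov; this follows immediately from iterating Equation \eqref{eq:ar1def} across each gap, which shows $X_{t_i}$ given $X_{t_1},\ldots,X_{t_{i-1}}$ depends only on $X_{t_{i-1}}$ and on error terms independent of the earlier observations. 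You should also restore the factor $\sigma^{-2}$ at the end if the normalization $\sigma=1$ is relaxed, as the paper notes.
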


\begin{proof}
  The fact that $\bs{Q}$ is a tridiagonal matrix follows from the interpretation of the off-diagonal elements of
  $\bs{Q}$ as the negated and scaled conditional correlations of $\bs{x}$ \citep[Theorem 2.2 in][]{Rue2005}:
  \begin{align}
    \text{Corr}(X_i, X_j | \bs{x}_{-ij}) &= -\frac{Q_{ij}}{\sqrt{Q_{ii} Q_{jj}}}, \quad \text{where} \label{eq:cond_corr} \\
    Q_{ii} &= \text{Prec}(X_i | \bs{x}_{-i}), \label{eq:cond_prec}
  \end{align}
  and the analogously defined $Q_{jj}$
  are the conditional precisions (variance reciprocals). Here, $\bs{x}_{-ij}$ denotes all elements of $\bs{x}$
  except the $i$th and $j$th.
  
  Now take $t_i, t_{i+1}, t_{j-1}, t_j \in \{t_1,t_2,\ldots,t_m\}$ with $i+1 \leq j-1$. 
  Note then that Equation \eqref{eq:ar1def} allows the representation
  \begin{align}
    X_{t_i} &= \rho^{-(t_{i+1} - t_i)} \left( X_{t_{i+1}} - \sum_{k=0}^{t_{i+1} - t_i - 1} \rho^k \epsilon_{t_{i+1} - k} \right) \\
    X_{t_j} &= \rho^{t_j - t_{j-1}} X_{t_{j-1}} + \sum_{k=0}^{t_j - t_{j-1} - 1} \rho^k \epsilon_{t_j - k}.
  \end{align}
  The quantities $X_{t_{i+1}}$ and $X_{t_{j-1}}$ are assumed known in the conditional correlation $\text{Corr}(X_{t_i}, X_{t_j} | \bs{x}_{-ij})$,
  and no same error term appears in both sums above. Thus, the correlation equals zero, which implies that $Q_{ij} = 0$ when
  we have at least one observation between $t_i$ and $t_j$. $\bs{Q}$ is then (at most) tridiagonal.
  
  With this knowledge, one way to determine the explicit form of the elements of $\bs{Q}$ is to solve the system of
  equations $\bs{\Sigma} \bs{Q} = \bs{I}_m$, where $\bs{I}_m$ is the $m \times m$ identity matrix. 
  For notational convenience, we define $\rho_{ji} = \rho^{j-i}$ with $j>i$.
  Because $\bs{Q}$ is symmetric ($Q_{k,k+1} = Q_{k+1,k}$) and tridiagonal, we only have $2m - 1$ unknowns to solve for, and 
  the system $\bs{\Sigma} \bs{Q} = \bs{I}_m$ can be reduced to the set of equations
  \begin{align}
  &\begin{cases} \label{eq:sys1}
    Q_{11} + \rho_{21} Q_{12} &= 1 - \rho^2 \\
    \rho_{21} Q_{11} + Q_{12} &= 0 \\
  \end{cases}, \\
  &\begin{cases} \label{eq:sys2}
    \rho_{k,k-1} Q_{k-1,k} + Q_{kk} + \rho_{k+1,k} Q_{k,k+1}   &= 1 - \rho^2\\
    \rho_{k+1,k-1} Q_{k-1,k} + \rho_{k+1,k} Q_{kk} + Q_{k,k+1} &= 0 \\
  \end{cases}, \quad 1 < k < m, \\
  &\begin{cases} \label{eq:sys3}
    \rho_{m,m-1} Q_{m-1,m} + Q_{mm} &= 1 - \rho^2. \\
  \end{cases}
\end{align}
The system \eqref{eq:sys1} may be solved for $Q_{11}$ and $Q_{12}$, after which system \eqref{eq:sys2}
can be solved iteratively for $k = 2, 3, \ldots, m - 1$ by inserting $Q_{k-1,k}$ found in the previous iteration.
Lastly, $Q_{mm}$ is easily solved for in Equation \eqref{eq:sys3} when $Q_{m-1,m}$ has been found in the previous
step.

If the assumption $\sigma = 1$ is relaxed, the non-zero elements of $\bs{Q}$ should be divided by $\sigma^{2}$ to yield the 
correct precision matrix.
\end{proof}

\noindent
Since $\bs{Q}$ has only $2m - 1$ non-zero elements, it can be constructed using only $\mathcal{O}(m)$ flops
(floating point operations). Likewise, it needs only $\mathcal{O}(m)$ space for storage if stored in 
a sparse format.

\begin{corollary} \label{corol:mean}
  Let $Y_{t_i} = X_{t_i} + \mu_{t_i}$, with $X_{t_i}$ an element of $\bs{x}$ as in Theorem \ref{thm:Q} and the $\mu_{t_i}\!\!$'s fixed.
  Also let $\bs{y} = (Y_{t_1}, \ldots, Y_{t_m})'$.
  Then by Theorem 2.3 of \citet{Rue2005}, the following conditional expected values hold for the elements of $\bs{y}$:
  \begin{align}
    \E[Y_{t_1} | \bs{y}_{-1}]\hspace{6pt} &= \mu_{t_1} + \rho^{t_2 - t_1} (y_{t_2} - \mu_{t_2}), \\
    \begin{split}
    \E[Y_{t_i} | \bs{y}_{-i}]\hspace{8pt} &= \mu_{t_i} 
                              \hspace{2pt}+ \rho^{t_i - t_{i-1}} \frac{1 - \rho^{2(t_{i+1} - t_i)}}{1 - \rho^{2(t_{i+1} - t_{i-1})}} (y_{t_{i-1}} - \mu_{t_{i-1}})  \\
                              &\hspace{33pt}+ \rho^{t_{i+1} - t_{i}} \frac{1 - \rho^{2(t_i - t_{i-1})}}{1 - \rho^{2(t_{i+1} - t_{i-1})}} (y_{t_{i+1}} - \mu_{t_{i+1}}),  
                              ~1 < i < m,
    \end{split} \\
    \E[Y_{t_m} | \bs{y}_{-m}] &= \mu_{t_m} + \rho^{t_m - t_{m-1}} (y_{t_{m-1}} - \mu_{t_{m-1}}),
  \end{align}
\end{corollary}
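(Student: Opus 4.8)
The plan is to invoke Theorem 2.3 of \citet{Rue2005}, which states that for a Gaussian vector with mean $\bs{\mu}$ and precision matrix $\bs{Q}$,
\begin{align*}
  \E[Y_{t_i} \mid \bs{y}_{-i}] = \mu_{t_i} - \frac{1}{Q_{ii}} \sum_{j \neq i} Q_{ij}\,(y_{t_j} - \mu_{t_j}).
\end{align*}
Since $\bs{y}$ is obtained from $\bs{x}$ by adding the fixed vector $\bs{\mu}$, it has the same covariance matrix and hence the same precision matrix $\bs{Q}$ as in Theorem \ref{thm:Q}; in particular $\bs{Q}$ is tridiagonal, so the sum on the right collapses to the (at most two) terms with $j = i-1$ and $j = i+1$. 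The whole argument then reduces to substituting the explicit entries of $\bs{Q}$ and simplifying.

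Concretely, for the interior case $1 < i < m$ I would compute the coefficients $-Q_{i,i-1}/Q_{ii}$ and $-Q_{i,i+1}/Q_{ii}$. Inserting $Q_{i,i-1} = -(1-\rho^2)\rho^{t_i - t_{i-1}}/\bigl(1 - \rho^{2(t_i - t_{i-1})}\bigr)$ together with the stated expression for $Q_{ii}$, the common factor $(1-\rho^2)$ and the factor $1 - \rho^{2(t_i - t_{i-1})}$ cancel, leaving $\rho^{t_i - t_{i-1}}\bigl(1 - \rho^{2(t_{i+1}-t_i)}\bigr)/\bigl(1 - \rho^{2(t_{i+1}-t_{i-1})}\bigr)$, exactly the claimed coefficient of $(y_{t_{i-1}} - \mu_{t_{i-1}})$; the coefficient of $(y_{t_{i+1}} - \mu_{t_{i+1}})$ follows by the symmetric computation with the roles of $t_{i-1}$ and $t_{i+1}$ interchanged. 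For the boundary cases there is a single neighbour, and $-Q_{12}/Q_{11}$ collapses at once to $\rho^{t_2 - t_1}$ (and likewise $-Q_{m-1,m}/Q_{mm} = \rho^{t_m - t_{m-1}}$), since numerator and denominator share the factor $(1-\rho^2)/\bigl(1 - \rho^{2(t_2-t_1)}\bigr)$.

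There is no serious obstacle here: once the conditional-mean identity is invoked, the corollary is a short and essentially forced algebraic simplification, the only thing to watch being the cancellations of the $(1-\rho^2)$ and $1 - \rho^{2(t_{i\pm1}-t_i)}$ factors in the ratios $Q_{ij}/Q_{ii}$. Finally, dropping the normalisation $\sigma = 1$ only scales $\bs{Q}$ by $1/\sigma^2$, which cancels in every such ratio, so the conditional expectations are unaffected, as they must be since they are scale-free.
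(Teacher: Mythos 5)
Your proposal is correct and follows exactly the route the paper intends: the corollary is stated as a direct application of Theorem 2.3 of Rue and Held (2005), and the paper leaves the substitution of the entries of $\bs{Q}$ from Theorem \ref{thm:Q} into the conditional-mean formula implicit, which is precisely the computation you carry out (your cancellations, including the scale-invariance in $\sigma$, all check out).
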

\noindent
where $\bs{y}_{-j}$ denotes all elements of $\bs{y}$ except the $j$th, for $1 \leq j \leq m$.
Together with the expression for the conditional precision given in Equation \eqref{eq:cond_prec}, 
Corollary \ref{corol:mean} specifies the full conditional (normal) distributions of the irregularly 
sampled AR(1) process.

\section{Implications}  \label{sec:implications}
\noindent
Typical density evaluation and random number generation of multivariate normal variables involves
a Cholesky or eigendecomposition of the covariance matrix \citep{Barr1972}. 
For an $m \times m$ matrix, the computational cost (in terms of the number of flops) associated with 
either method is $\mathcal{O}(m^3)$ \citep{NLA}. For large $m$, this cost becomes prohibitive, and even with $m$
smaller such decompositions can become the bottleneck when they need to be performed repeatedly. 
For example, a MCMC algorithm trying to infer the distribution of $\rho$ may need
to evaluate the density of $\bs{x}$ thousands of times, if not more. However, if the sparse structure of $\bs{Q}$ can
be used, this cost can be drastically reduced.

\subsection{Cholesky factorization}  \label{sec:chol}
\noindent
Indeed, the sparsity of $\bs{Q}$ carries over to its Cholesky decomposition.
By Theorem 2.9 in \citet{Rue2005}, the (lower) Cholesky decomposition $\bs{L}$ of $\bs{Q}$,
i.e.\ $\bs{Q} = \bs{L} \bs{L}^T$, will have a lower bandwidth of 1---that is, only the main 
diagonal and (first) subdiagonal will have non-zero elements. This decomposition is computable
in linear time using Algorithm 2.9 in \citet{Rue2005}, described next. 

Let $v$ be a vector of length $m$ and let $v_{i:j}$ be elements $i$ to $j$ of this vector.
Denote by $Q_{i:k,j}$ elements $i$ to $k$ in column $j$ of $\bs{Q}$, and let the same notation be
applicable to $\bs{L}$ and its elements $L_{i,j}$, which are initialized to zero.
The matrix $\bs{L}$ can then be computed using the following algorithm:
\begin{algorithm}[H]
\caption{Band-Cholesky factorization of $\bs{Q}$ (bandwidth 1)}\label{alg:L}
\begin{algorithmic}[1]
\For{$j = 1$ to $m$}
  \State $\lambda \hspace{10.5pt} \gets \min\{j+1,m\}$
  \State $v_{j:\lambda} \gets Q_{j:\lambda,j}$
  \If{$j > 1$}
    \State $v_{j} \gets v_{j} - L_{j,j-1}^2$
  \EndIf
  \State $L_{j:\lambda,j} \gets v_{j:\lambda} / \sqrt{v_j}$
\EndFor
\State \textbf{Return} $\bs{L}$
\end{algorithmic}
\end{algorithm}

\noindent
This algorithm is seen to involve only $\mathcal{O}(m)$ flops. Additionally, if $\bs{L}$ is stored in a sparse format, 
the storage is of size $\mathcal{O}(m)$ as well.

\subsection{Unconditional simulation} \label{sec:sim}
\noindent
If we wish to sample $\bs{x} \sim \text{MVN}\left(\bs{\mu}, \bs{Q}^{-1}\right)$ as in Theorem \ref{thm:Q}, but now with a
mean vector $\bs{\mu}$, we can use the following algorithm \citep[Algorithm 2.4 in][]{Rue2005}:
\begin{algorithm}[H]
\caption{Sampling $\bs{x} \sim \text{Normal}(\bs{\mu}, \bs{Q}^{-1})$}\label{alg:uncond_sampling}
\begin{algorithmic}[1]
\State Compute $\bs{L}$ using Algorithm \ref{alg:L}.
\State Sample $m$ standard normal variables and store them in a vector $\bs{z}$.
\State Solve $\bs{L}^T \bs{v} = \bs{z}$ using sparse back substitution (see Algorithm \ref{alg:sbs} below). 
\State Compute $\bs{x} = \bs{\mu} + \bs{v}$.
\State \textbf{Return} $\bs{x}$
\end{algorithmic}
\end{algorithm}

\noindent
This algorithm is seen to be of order $\mathcal{O}(m)$ in computational complexity.
The sparse back substitution in step 3 of Algorithm \ref{alg:uncond_sampling} computes the 
elements of $\bs{v}$ as follows:
\begin{algorithm}[H]
\caption{Solving $\bs{L}^T \bs{v} = \bs{z}$ when $\bs{L}$ has bandwidth 1}\label{alg:sbs}
\begin{algorithmic}[1]
\State $v_m = z_m / L_{m,m}$
\For{$i = m-1$ to $1$}
  \State $v_i = (z_i - L_{i+1,i} v_{i+1}) / L_{i,i}$
\EndFor
\State \textbf{Return} $\bs{v}$
\end{algorithmic}
\end{algorithm}

\noindent
Only $3m-2$ flops are used to produce the solution $\bs{v}$.

\subsection{Conditional simulation} \label{sec:condsim}
\noindent
Assume now that $\bs{t}_{\text{o}} = \{t_1, \ldots, t_m\}$ are the time points at which observations $\bs{x}_{\text{o}}$ are available,
and let $\bs{t}_{\text{p}} = \{s_1, \ldots, s_k\}$ be another set of time points disjoint from $\bs{t}_{\text{o}}$.
Suppose that we wish to simulate values $\bs{x}_{\text{p}}$ from the distribution of the process at times $\bs{t}_{\text{p}}$,
conditional on $\bs{x}_{\text{o}}$. From standard facts about the multivariate normal distribution \citep[see e.g.][]{Rue2005}, we know that
\begin{align*}
  \bs{x}_{\text{p}} | \bs{x}_{\text{o}} &\sim \text{Normal}\left(\bs{\mu}_{\text{p$|$o}}, \bs{\Sigma}_{\text{p$|$o}}\right), \quad \text{where} \\
  \bs{\mu}_{\text{p$|$o}} &= \bs{\mu}_{\text{p}} + \bs{\Sigma}_{\text{po}} \bs{\Sigma}_{\text{oo}}^{-1} \left(\bs{x}_{\text{o}} - \bs{\mu}_{\text{o}} \right), \\
  \bs{\Sigma}_{\text{p$|$o}} &= \bs{\Sigma}_{\text{pp}} - \bs{\Sigma}_{\text{po}} \bs{\Sigma}_{\text{oo}}^{-1} \bs{\Sigma}_{\text{op}},
\end{align*}
and
\begin{align*}
  \bs{\mu}_{\text{a}} = 
    \begin{bmatrix}
      \bs{x}^*_{\text{p}} \\
      \bs{x}^*_{\text{o}} \\
    \end{bmatrix}
  \quad \text{and} \quad 
  \bs{\Sigma}_{\text{a}} &=
  \begin{bmatrix}
    \bs{\Sigma}_{\text{pp}} & \bs{\Sigma}_{\text{po}} \\
    \bs{\Sigma}_{\text{op}} & \bs{\Sigma}_{\text{oo}}
  \end{bmatrix}
\end{align*}
are the mean vector and covariance matrix of $\bs{x}_{\text{a}} = (\bs{x}^T_{\text{p}}, \bs{x}^T_{\text{o}})^T$.
In general, even if the matrix $\bs{\Sigma}_{\text{oo}}^{-1}$ is sparse, the computation of $\bs{\Sigma}_{\text{p$|$o}}$
will be demanding for $k \gg m$, and further so if a Cholesky or eigendecomposition of the result is to be computed as well.

A more efficient way of sampling from the distribution of $\bs{x}_{\text{p}} | \bs{x}_{\text{o}}$, especially when $k \gg m$, 
was described by \citet{Hoffman1991}.
To use this method, we need to be able to sample (unconditionally) from the joint distribution of 
$\bs{x}_{\text{a}}$. One way of doing so is to just
iteratively simulate from the definition of the process (Equation \ref{eq:ar1def}) using a starting value
drawn from the stationary distribution (and with mean terms added back to the right hand side),
and then pick out the values for the times $\bs{t}_{\text{p}} \cup \bs{t}_{\text{o}}$.
Another is to first sort $\bs{t}_{\text{p}} \cup \bs{t}_{\text{o}}$ (if necessary), order and combine the mean vectors accordingly,
create the corresponding precision matrix $\bs{Q}_{\text{a}} = \bs{\Sigma}^{-1}_{\text{a}}$, draw samples using Algorithm \ref{alg:uncond_sampling},
and re-order the samples according to $\bs{t}_{\text{p}}$ and $\bs{t}_{\text{o}}$.

Let $\bs{Q}_{\text{o}}$ be the precision matrix of $\bs{x}_{\text{o}}$, i.e.\ $\bs{Q}_{\text{o}} = \bs{\Sigma}^{-1}_{\text{oo}}$. Then we can sample
from the distribution of 
$\bs{x}_{\text{p}} | \bs{x}_{\text{o}}$ as follows \citep{Hoffman1991}:
\begin{algorithm}[H]
\caption{Sampling $\bs{x}_{\text{p}} | \bs{x}_{\text{o}} \sim \text{Normal}\left(\bs{\mu}_{\text{p$|$o}}, \bs{\Sigma}_{\text{p$|$o}}\right)$} \label{alg:cond_sampling}
\begin{algorithmic}[1]
\State Sample 
       $
         \bs{x}^*_{\text{a}} = \begin{bmatrix}
                                \bs{x}^*_{\text{p}} \\
                                \bs{x}^*_{\text{o}} \\
                              \end{bmatrix}
                            \sim
                            \text{Normal}\left( \bs{\mu}_{\text{a}}, \bs{\Sigma}_{\text{a}} \right)         
       $ as discussed above.
\State \textbf{Return} $\bs{x}_{\text{p}} = \bs{x}^*_{\text{p}} + \bs{\Sigma}_{\text{po}} \bs{Q}_{\text{o}} (\bs{x}_{\text{o}} - \bs{x}^*_{\text{o}})$.
\end{algorithmic}
\end{algorithm}
\noindent
Because $\bs{Q}_{\text{o}}$ is tridiagonal, the matrix product $\bs{\Sigma}_{\text{po}} \bs{Q}_{\text{o}}$ involves only
$\mathcal{O}(km)$ flops, rather than the $\mathcal{O}(km^2)$ flops that are needed if $\bs{Q}$ is dense.
Thus the complexity of Algorithm \ref{alg:cond_sampling} is $\mathcal{O}(km)$, plus $\mathcal{O}\left((k+m)\log(k+m)\right)$ if sorting
is to be done in Step 1.

\subsection{Density evaluation} \label{sec:density}
\noindent
Many applications require the evaluation of probability density functions. For example, typical MCMC algorithms
calculate a quotient (or log difference) of densities repeatedly in the evaluation of acceptance ratios. 
It is therefore of interest to make this computation as efficient as possible for the type of 
irregularly sampled AR(1) process considered in this paper. Typically, the evaluation of a $m$-dimensional multivariate normal
density involves the Cholesky decomposition of the covariance matrix, which as discussed previously has a 
computational complexity of $\mathcal{O}(m^3)$. With a sparse precision matrix $\bs{Q}$ however, this 
cost can be drastically reduced. 

Let $\bs{x} \sim \text{Normal}\left(\bs{\mu}, \bs{Q}^{-1}\right)$ be a vector of values from
an irregularly sampled AR(1) process as in Theorem \ref{thm:Q}, and let $p(\bs{x})$ be probability density function evaluated
at $\bs{x}$.
To calculate $\log p(\bs{x})$ of a sample $\bs{x}$ from this distribution, first calculate the Cholesky decomposition 
$\bs{L}$ of $\bs{Q}$ using Algorithm \ref{alg:L}.
Then, by \citet[p.\ 35]{Rue2005}, the log-density can be computed as follows:
\begin{align}
  \log p(\bs{x}) &= -\frac{m}{2} \log (2 \pi) + \sum_{i=1}^m \log L_{i,i} - \frac{1}{2} q, \quad \text{where} \\
  q &= \left( \bs{x} - \bs{\mu} \right)^T \bs{Q} \left( \bs{x} - \bs{\mu} \right).
\end{align}
If $\bs{x}$ was generated using Algorithm \ref{alg:uncond_sampling}, $q$ simplifies to $q = \bs{z}^T \bs{z}$.
By utilizing the sparsity of $\bs{Q}$, the evaluation of the log-density has a computational complexity of
$\mathcal{O}(m)$.

\section{Summary}
\noindent
This paper provides analytical expressions for the elements of the precision matrix $\bs{Q}$ of a stationary Gaussian AR(1) process
sampled with irregular spacing. The sparsity of this matrix was shown in Section \ref{sec:implications} to yield efficient algorithms 
for density evaluation and simulation of such a process. Applications of AR(1) processes are abound in biostatistics and finance,
and the results of this paper should prove relevant for those in need of computational efficiency. 
More generally, the results are valuable from a missing data perspective. A simple extension of this paper is calculate exactly
how the results derived carry over to the distribution of a irregularly spaced sample from an Ornstein-Uhlenbeck process, which is the
continuous-time analog of the AR(1) process. 
A memory-efficient implementation of the given algorithms, enabled by the \textsf{RcppArmadillo} library \citep{RcppArmadillo},
is provided by the R package \textsf{irregulAR1}, 
available on the Comprehensive R Archive Network (CRAN).

\section*{Acknowledgements}
\noindent
Funding: This work was supported by the Swedish Research Council [grant 2013:05204].
\bibliographystyle{apalike}
\bibliography{ms}
\end{document}